\documentclass[prb,aps,floatfix,amsmath,amssymb,superscriptaddress,tightenlines,twocolumn]{revtex4}
\usepackage{graphicx}% Include figure files
\usepackage{epstopdf}
\newcommand{\be}{\begin{equation}}
\newcommand{\ee}{\end{equation}}
\usepackage{amsmath}
\usepackage{amsfonts}
\usepackage{bm}
\usepackage{color}
\usepackage{subfigure}
\usepackage{amsthm}
\usepackage[latin1]{inputenc}

\newcommand{\bra}[1]{\left\langle #1 \right|}
\newcommand{\ket}[1]{\left|#1\right\rangle}

\newcommand{\Tr}{\textrm{Tr}}

\newtheorem{lem}{Lemma}
\newtheorem{thm}{Theorem}
\newtheorem{defi}{Definition}

\begin{document}
\title{Determining the structure of real-space entanglement spectrum from approximate conditional independence}
\author{Isaac H. Kim}
\affiliation{Institute of Quantum Information, California Institute of Technology, Pasadena CA 91125, USA}

\date{\today}
\begin{abstract}
We study the ground state of a gapped quantum many-body system whose entanglement entropy $S_A$ can be expressed as $S_A = a|\partial A| - \gamma$, where $a, \gamma$ are some constants and $|\partial A|$ is an area of the subsystem $A$. By using a recently proved operator extension of strong subadditivity of entropy,[I. H. Kim, J. Math. Phys. 53, 122204 (2012)] we show that certain linear combination of the real-space entanglement spectrum has a small correlation with almost any local operator. Our result implies that there exists a structure relating the real-space entanglement spectrum over different subsystems. Further, this structure is inherited from the generic property of the ground state alone, suggesting that the locality of the entanglement spectrum may be attributed to the area law of entanglement entropy.
\end{abstract}

\maketitle
\section{Introduction}
It is commonly believed that gapped phases of quantum many-body systems exhibit area law: entanglement entropy of a simply connected subsystem increases as the area of the boundary.\cite{Eisert2008} Overwhelming amount of evidences supporting this statement has been suggested, including the explicit proof for a ground state of 1D gapped system\cite{Hastings2007a}, exactly solvable models\cite{Levin2005}, and variational wavefunctions\cite{Verstraete2006}. Constant subcorrection to the entanglement entropy - also known as the topological entanglement entropy - can be extracted by judiciously choosing a set of subsystems that cancel out the boundary contributions.\cite{Kitaev2006,Levin2006} Topological entanglement entropy is believed to be a universal constant characterizing the phase of the quantum many-body system.

Li and Haldane(LH) were the first to realize that the spectrum of the reduced density matrix may reveal an information about the phase that cannot be inferred from the entanglement entropy alone.\cite{Li2008,Thomale2010} While LH studied reduced density matrix in the orbital cuts, one may study its spectrum along a real-space partition and arrive at a similar conclusion.\cite{Sterdyniak2011,Dubail2011,Rodriguez2012} In particular, it has been recently suggested by several authors that entanglement spectrum along a real-space partition has a low-lying part that can be described by a local field theory.\cite{Cirac2011,Dubail2012}

Topological entanglement entropy can be obtained from a real-space entanglement spectrum of variational wavefunctions, similar to the way it is  extracted from the entanglement entropy.\cite{Dubail2012} Consequently, the corresponding linear combination of entanglement spectrum is ``topological'', in a sense that i) it does not interact with any local observable ii) it is equal to the topological entanglement entropy.

Here we claim that the existence of such topological operator can be attributed to an approximate conditional independence of these quantum states. A tripartite state $\rho_{ABC}$ is conditionally independent if conditional mutual information $I(A:C|B)  = S_{AB} + S_{BC} - S_{B} - S_{ABC}$ is equal to $0$. A state is approximately conditionally independent if $0$ is replaced by a small number $\epsilon>0$. To the best of author's knowledge, Hastings and Poulin were the first to point out that there can be configurations that are conditionally independent even in a quantum many-body system with long range entanglement.\cite{Poulin2010a} To illustrate their idea, suppose entanglement entropy satisfies an area law with a \emph{universal} constant subcorrection term.
\begin{equation}
S_A = a|\partial A| - \gamma, \label{eq:area_law}
\end{equation}
One can show that $I(A:C|B)=0$ for a choice of $A,B,C$ such that i) $AB,BC,B,ABC$ are all simply connected ii) $A$ and $C$ do not share a boundary.

A state that is conditionally independent saturates the equality condition of the strong subadditivity of entropy.\cite{Lieb1973} Such state forms a quantum Markov chain, and the structure of the reduced density matrix is vastly restricted compared to an arbitrary state.\cite{Petz2003,Hayden2004,Leifer2007} It is important to note that one cannot directly use these results for a generic quantum many-body system, since the conditional independence condition is unlikely to hold exactly. Still, one may hope these properties to hold approximately for a sufficiently small conditional mutual information. This is precisely the key idea behind this paper. More specifically, we shall use the recently discovered operator extension of the strong subadditivity of entropy as our main technical tool.\cite{Kim2012a}

Rest of the paper is structured as follows. In Section \ref{section:ACI}, we shall briefly review several information-theoretic inequalities. In Section \ref{section:correlation_bound}, we shall introduce a diagrammatic trick that leads to the main result of this paper. Its physical interpretation shall be given in Section \ref{section:interpretation}.

\section{Approximately conditionally independent states\label{section:ACI}}
Strong subadditivity of entropy is one of the most widely used tools in quantum information theory. Its importance stems from the fact that there exists a variety of nontrivial structure theorems that relate the reduced density matrix of different subsystems if the inequality is saturated with an equality condition.\cite{Petz2003,Hayden2004,Leifer2007} In particular, Petz showed that the following relation holds if and only if the conditional mutual information $I(A:C|B)$ is equal to $0$.\cite{Petz2003}
\begin{equation}
\hat{H}_{AB} + \hat{H}_{BC} - \hat{H}_B - \hat{H}_{ABC}=0,\label{eq:equality_condition}
\end{equation}
where $\hat{H}_A = -I_{A^c} \otimes \log \rho_A$ is a formal definition of the entanglement spectrum. From now on, we denote the left hand side of the equation as $\hat{H}_{A:C|B}$ and refer to it as a \emph{conditional mutual spectrum} of $ABC$. It follows that
\begin{equation}
\mathcal{C}(\hat{H}_{A:C|B} , X) = 0, \label{eq:ccf}
\end{equation}
where $\mathcal{C}(\hat{H}_{A:C|B} , X) = \langle \hat{H}_{A:C|B} X\rangle - \langle \hat{H}_{A:C|B}\rangle \langle X\rangle$ is a connected correlation function between the conditional mutual spectrum and an arbitrary operator $X$. $\langle \cdots \rangle$ denotes ground state expectation value.

While such operator trivially has zero correlation with any local operator, exact conditional independence is rarely satisfied by any realistic physical systems.  Motivated by this observation, author has recently obtained a nontrivial statement about the spectrum of the reduced density matrices.\cite{Kim2012a}
\begin{thm}
\begin{equation}
\Tr_{BC}(\rho_{ABC} \hat{H}_{A:C|B}) \geq 0. \label{eq:OSSA}
\end{equation}
\end{thm}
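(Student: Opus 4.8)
The plan is to prove (\ref{eq:OSSA}) as the operator-valued lift of the Jensen-inequality proof of ordinary strong subadditivity, upgraded with Lieb's concavity theorem at the point where the scalar argument breaks down. Recall the classical template: for a distribution $p(abc)$ and a fixed value of $a$ one has
\begin{equation}
\sum_{bc}p(abc)\log\frac{p(b)\,p(abc)}{p(ab)\,p(bc)}\;\ge\;0,
\end{equation}
which follows from concavity of the logarithm (Jensen's inequality with weights $p(bc|a)$) together with the normalization $\sum_{bc}p(ab)p(bc)/(p(a)p(b))=1$. Since the left-hand side is exactly the $(a,a)$ matrix element of $\Tr_{BC}(\rho_{ABC}\hat{H}_{A:C|B})$ when all states are diagonal, the goal is to run this argument with the operator-concave logarithm in place of the scalar one.

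First I would reduce to the case in which $\rho_{ABC},\rho_{AB},\rho_{BC},\rho_B$ are all strictly positive (replace each $\rho$ by $\rho+\varepsilon I$, prove the bound, and send $\varepsilon\to 0$), and I would read (\ref{eq:OSSA}) with the symmetric ordering $\Tr_{BC}(\rho_{ABC}^{1/2}\hat{H}_{A:C|B}\rho_{ABC}^{1/2})$, which is manifestly Hermitian, has the same partial-trace structure, and the same full trace $I(A:C|B)$. To linearize the logarithm I would use $\log X=\int_0^\infty\big(\tfrac{1}{1+s}-\tfrac{1}{X+s}\big)\,ds$, which turns $\hat{H}_{A:C|B}$ into an $s$-integral of the resolvent combination $(\rho_{AB}+s)^{-1}+(\rho_{BC}+s)^{-1}-(\rho_B+s)^{-1}-(\rho_{ABC}+s)^{-1}$ (each term tensored with the appropriate identity), so that it suffices to prove positivity of the corresponding $\Tr_{BC}$ for each fixed $s$, where every quantity is a bounded rational function of the reduced density matrices. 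I would also note that the tempting shortcut --- operator Jensen for $\log$ applied to the unital completely positive map $\mathcal{T}(Z)=\rho_A^{-1/2}\Tr_{BC}(\rho_{ABC}^{1/2}Z\rho_{ABC}^{1/2})\rho_A^{-1/2}$ --- does \emph{not} quite close, since it would require $\mathcal{T}\big(e^{-\hat{H}_{A:C|B}}\big)=I_A$, i.e.\ $\Tr_{BC}\big(\rho_{ABC}^{1/2}e^{\log\rho_{AB}+\log\rho_{BC}-\log\rho_B-\log\rho_{ABC}}\rho_{ABC}^{1/2}\big)=\rho_A$, and this fails by Baker--Campbell--Hausdorff corrections; this is the familiar non-commutative obstruction separating the Umegaki form of strong subadditivity from its easier Belavkin--Staszewski cousin.

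The correct engine, following the structure of the Lieb--Ruskai proof of strong subadditivity, is Lieb's concavity theorem --- the joint concavity of $(A,B)\mapsto\Tr\,K^\dagger A^{s}K\,B^{1-s}$ for $s\in[0,1]$ --- in its operator-valued form, i.e.\ the joint operator convexity/concavity of the associated perspective maps $(A,B)\mapsto A^{1/2}f(A^{-1/2}BA^{-1/2})A^{1/2}$. Applied at the level of the fixed-$s$ resolvent inequality above, with $K$ and the positive arguments built from $\rho_{AB}^{1/2}\rho_B^{-1/2}$ and $\rho_{ABC}^{1/2}$, this should deliver the desired operator inequality on $A$ directly, with the concavity theorem absorbing the non-commutativity that defeated the naive normalization: it is the differential, operator-valued incarnation of the classical Jensen step.

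The main obstacle is precisely the passage from Lieb's concavity as a trace (scalar) statement to the operator-valued statement that must survive only a partial trace over $BC$ and remain positive semidefinite as an operator on $A$. Concretely this means choosing completely positive maps under which the partial traces of the four resolvents recombine into a single operator difference controlled by a jointly operator-convex functional, proving that operator-valued strengthening of the Lieb--Ruskai key lemma, and keeping the non-commuting conjugations by $\rho_B^{-1/2}$ and $\rho_{ABC}^{1/2}$ under control --- and here the $s$-by-$s$ resolvent reduction is the essential simplification, since it trades every logarithm and exponential for a bounded algebraic function of the reduced density matrices.
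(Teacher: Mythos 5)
The paper itself does not prove this theorem --- it is imported verbatim from Ref.~\cite{Kim2012a} --- so the only thing to judge is whether your plan would actually yield a proof. It would not, because its central reduction is false. After writing
\begin{equation*}
\hat{H}_{A:C|B}=\int_0^\infty\Bigl[(\rho_{AB}+s)^{-1}+(\rho_{BC}+s)^{-1}-(\rho_B+s)^{-1}-(\rho_{ABC}+s)^{-1}\Bigr]\,ds,
\end{equation*}
you assert that ``it suffices to prove positivity of the corresponding $\Tr_{BC}$ for each fixed $s$.'' That fixed-$s$ statement is not true, even classically. Take $B$ trivial and $\rho_{AC}=\rho_A\otimes\rho_C$ diagonal with $\rho_A=\rho_C=\mathrm{diag}(1/2,1/2)$. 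Then $\hat{H}_{A:C|B}=0$ and the theorem holds with equality, but the fixed-$s$ integrand traced against the state equals $\frac{1}{2}\bigl[\frac{2}{1/2+s}-\frac{1}{1+s}-\frac{1}{1/4+s}\bigr]I_A$, which is $-\frac{1}{2}I_A$ at $s=0$ and $\approx+\frac{1}{8s^2}I_A$ at large $s$: the integrand changes sign and only the full $s$-integral vanishes. The resolvent decomposition, organized term by term in $s$, destroys exactly the cancellation the theorem relies on, already in the commuting case where the scalar Jensen argument is trivial. (Your classical ``template'' inequality at fixed $a$ is fine, but it is proved by applying concavity of $\log$ to the \emph{whole} sum, not by any pointwise-in-$s$ positivity, so it does not support the reduction you build on it.)

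The rest of the proposal is in the right neighborhood but does not close the gap. Your diagnosis of why the naive operator-Jensen normalization fails (the BCH obstruction to $\Tr_{BC}(\rho^{1/2}e^{-\hat{H}_{A:C|B}}\rho^{1/2})=\rho_A$) is correct, and reaching for Lieb-type joint convexity is the right instinct. But the step that would actually deliver the theorem --- an operator-valued inequality that survives the partial trace over $BC$, e.g.\ the monotonicity statement comparing $\Tr_{BC}\bigl(\rho_{ABC}(\log\rho_{ABC}-\log\rho_{BC})\bigr)$ with $\Tr_{B}\bigl(\rho_{AB}(\log\rho_{AB}-\log\rho_{B})\bigr)$ as operators on $A$ --- is precisely what you leave unproved, and the scaffolding you erect toward it rests on the false fixed-$s$ claim. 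The known route to such operator statements goes through the joint operator convexity of non-commutative perspective functions built from \emph{commuting} left- and right-multiplication superoperators $L_X$, $R_Y$ (Effros, Hansen--Pedersen), or through integral representations whose integrands are manifestly positive quadratic forms of the type $\Tr\bigl((X-Y)(X+s)^{-1}(X-Y)(Y+s)^{-1}\bigr)$; neither is the bare resolvent difference you propose, and the conjugations by $\rho_B^{-1/2}$ and $\rho_{ABC}^{1/2}$ you invoke do not commute, which is exactly the difficulty the left/right-multiplication trick exists to remove. As it stands, the proposal identifies the correct obstacles but does not overcome them.
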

We would like to emphasize two important facts about this inequality. First, Eq.\ref{eq:OSSA} reproduces a statement similar to Eq.\ref{eq:equality_condition} when the conditional mutual information is $0$. This can be seen from the following lemma.
\begin{lem}
\begin{equation}
|\Tr_{ABC}(\rho_{ABC} \hat{H}_{A:C|B} O_A)| \leq \|O_A \| I(A:C|B),
\end{equation}
where $\| \cdots \|$ is $l_{\infty}$ norm. \label{lemma:lemma1}
\end{lem}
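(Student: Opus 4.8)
The plan is to reduce the statement to the operator inequality of Theorem~1 together with the elementary duality between trace norm and operator norm; the computational content is light, and almost all of the real work is already packaged inside Theorem~1. First I would evaluate the bare traces. Since $\hat{H}_{AB}=-I_{(AB)^c}\otimes\log\rho_{AB}$ and similarly for the other terms, one has $\Tr_{ABC}(\rho_{ABC}\hat{H}_{AB})=-\Tr_{AB}(\rho_{AB}\log\rho_{AB})=S_{AB}$, and likewise $\Tr_{ABC}(\rho_{ABC}\hat{H}_{BC})=S_{BC}$, $\Tr_{ABC}(\rho_{ABC}\hat{H}_{B})=S_{B}$, and $\Tr_{ABC}(\rho_{ABC}\hat{H}_{ABC})=S_{ABC}$. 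Adding these with the signs appearing in $\hat{H}_{A:C|B}$ gives $\Tr_{ABC}(\rho_{ABC}\hat{H}_{A:C|B})=I(A:C|B)$.

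Next I would introduce the operator $M_A:=\Tr_{BC}(\rho_{ABC}\hat{H}_{A:C|B})$ supported on $A$. By Theorem~1, $M_A\geq 0$, and by the computation above $\Tr_A(M_A)=I(A:C|B)$, so $\|M_A\|_1=\Tr_A(M_A)=I(A:C|B)$. Because $O_A=O_A\otimes I_{BC}$ is supported on $A$, expanding $\rho_{ABC}\hat{H}_{A:C|B}$ in a product basis and tracing out $BC$ shows $\Tr_{ABC}(\rho_{ABC}\hat{H}_{A:C|B}O_A)=\Tr_A(M_A O_A)$; note this holds even though $\hat{H}_{A:C|B}$ does not commute with $O_A$, since $O_A$ can still be pulled outside the partial trace over $BC$. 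Finally, the Hölder-type bound $|\Tr_A(M_A O_A)|\leq\|M_A\|_1\|O_A\|$ yields $|\Tr_{ABC}(\rho_{ABC}\hat{H}_{A:C|B}O_A)|\leq\|O_A\|\,I(A:C|B)$, which is the claim. For non-Hermitian $O_A$ one may instead write $\Tr_A(M_A O_A)=\Tr_A(M_A^{1/2}O_A M_A^{1/2})$ and bound $|\Tr_A(\cdot)|\leq\Tr_A|M_A^{1/2}O_A M_A^{1/2}|\leq\|O_A\|\Tr_A(M_A)$, or simply split $O_A$ into Hermitian and anti-Hermitian parts.

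The main obstacle, such as it is, is not in this argument at all but in the input: the bound is only as strong as Theorem~1, i.e. the operator extension of strong subadditivity, whose proof is the genuinely nontrivial ingredient and is taken as given here. The only point requiring care in the present lemma is the bookkeeping of partial traces, ensuring that the reduction to $\Tr_A(M_A O_A)$ is valid and that $M_A$ as defined is exactly the quantity asserted to be positive semidefinite by Theorem~1.
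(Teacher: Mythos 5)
Your proof is correct and follows essentially the same route as the paper: both define $M_A=\Tr_{BC}(\rho_{ABC}\hat{H}_{A:C|B})$, invoke Theorem~1 to conclude $\|M_A\|_1=\Tr_A M_A=I(A:C|B)$, and finish with the H\"older duality $|\Tr_A(M_A O_A)|\leq\|M_A\|_1\|O_A\|$. The extra detail you supply (the trace computation giving $I(A:C|B)$ and the justification for pulling $O_A$ outside the partial trace) is correct and simply makes explicit what the paper leaves implicit.
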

\begin{proof}
Let $\mathcal{I}_A=\Tr_{BC}(\rho_{ABC} \hat{H}_{A:C|B})$. $|\Tr_A (\mathcal{I}_AO_A) | \leq \|O_A \| |\mathcal{I}_A|_1$. $|\cdots|_1$ is the $l_1$ norm. Since $\mathcal{I}_A$ is positive, $|\mathcal{I}_A|_1 = \Tr_A \mathcal{I}_A = I(A:C|B)$.
\end{proof}
If the conditional mutual information vanishes, the corresponding conditional mutual spectrum has zero correlation with any operator supported on $A$. Furthermore, since both $\hat{H}_{A:C|B}$ and $I(A:C|B)$ are symmetric under the exchange of $A$ and $C$, the same statement holds for an operator supported on $C$ as well. Secondly, Eq.\ref{eq:OSSA} is satisfied by any quantum states. Therefore, unlike Eq.\ref{eq:equality_condition}, it can be applied to quantum states that  \emph{approximately} saturate the strong subadditivity of entropy.

\section{Correlation bound for entanglement spectrum\label{section:correlation_bound}}
The main goal of this section is to obtain a statement that resembles Eq.\ref{eq:ccf} when the global state is a ground state of a gapped quantum many-body system. Such correlation bound can be easily obtained in certain cases using Lemma \ref{lemma:lemma1} alone, but there are also important caveats. For example, there are choices of subsystems that yield a nonzero value of conditional mutual information even at a fixed point of some renormalization-group flow.\cite{Kitaev2006,Levin2006} Furthermore, Lemma \ref{lemma:lemma1} alone cannot produce any bound on the correlation between the conditional mutual spectrum $\hat{H}_{A:C|B}$ and an operator supported on $B$. We shall show that, despite these shortcomings, it is still possible to obtain a bound analogous to Eq.\ref{eq:ccf} under a reasonable set of assumptions.

A brief comment on the notation is in order. For a conditional mutual spectrum $\hat{H}_{A:C|B}$, we shall refer $B$ as the {\it reference party} and $A,C$ as {\it target parties}. Also, we shall diagrammatically represent the operator $\hat{H}_{A:C|B}$ with the following rule. The reference party corresponds to the region with a `R' sign. Each of the target parties corresponds to one of the simply connected regions with a `T' sign. When taking a partial trace, subsystem $X$ is used to denote the nontrivial support of operator $X$. Shaded region in the diagram is a nontrivial support of $X$.

We postulate the following modified formula for the entanglement entropy to account for the deviations from the ideal area law.
\begin{equation}
S_A = a|\partial A| - \gamma + \epsilon_A.
\end{equation}
\begin{equation}
S_A + S_B - S_{AB} = \epsilon_{A:B}.
\end{equation}
For a large enough subsystem size, we expect $\epsilon_A$ to approach $0$. $\epsilon_{A:B}$  denotes a long range correlation of the ground state. Due to the exponential clustering theorem, we expect $\epsilon_{A:B}$ to scale as $\min(|A|,|B|)^2 e^{-\frac{2l}{ \xi}}$, where $\xi$ is the  correlation length and $|A|$ is the volume of the subsystem $A$.\footnote{The volume factor was chosen in such a way that the bound on connected correlation function from mutual information in Reference \cite{Wolf2007} yields the exponential clustering theorem in Reference \cite{Nachtergaele2006}.}

To simplify the analysis, we assume that each of the subsystems are sufficiently smooth and their boundary lengths are $O(l)$. We assume that the support of $X$ is sufficiently small compared to the size of the subsystems. We also assume that $X$ is supported on only one of the subsystems that partitions the system.

\subsection{Deformation moves}
The key idea for generalizing Eq.\ref{eq:ccf} is that one can decompose $\hat{H}_{A:C|B}$ into a sum of $\hat{H}_{A_i:C_i|B_i}$ in such a way that either i) $I(A_i:C_i|B_i)$ is small or ii) $A_iB_iC_i$ is sufficiently far away from the support of $X$. Such decomposition can be derived from a simple application of the following \emph{chain rule of conditional mutual spectrum}.
\begin{equation}
\hat{H}_{A_1A_2:C|B} = \hat{H}_{A_2:C|B} + \hat{H}_{A_1:C|A_2B}\label{eq:chain_rule}.
\end{equation}
The chain rule can be easily verified from the definition of the conditional mutual spectrum. While any deformation of the subsystem can be expressed as a linear combination of the chain rule, we define three elementary deformation moves for the clarity of the exposition.

First example is an {\it isolation move}. Goal of the isolation move is to deform the boundary between the target party and the reference party so that $X$ can be sufficiently separated from the reference party, see FIG.\ref{fig:isolation}.
\begin{figure}[h!]
\includegraphics[width=0.48\textwidth]{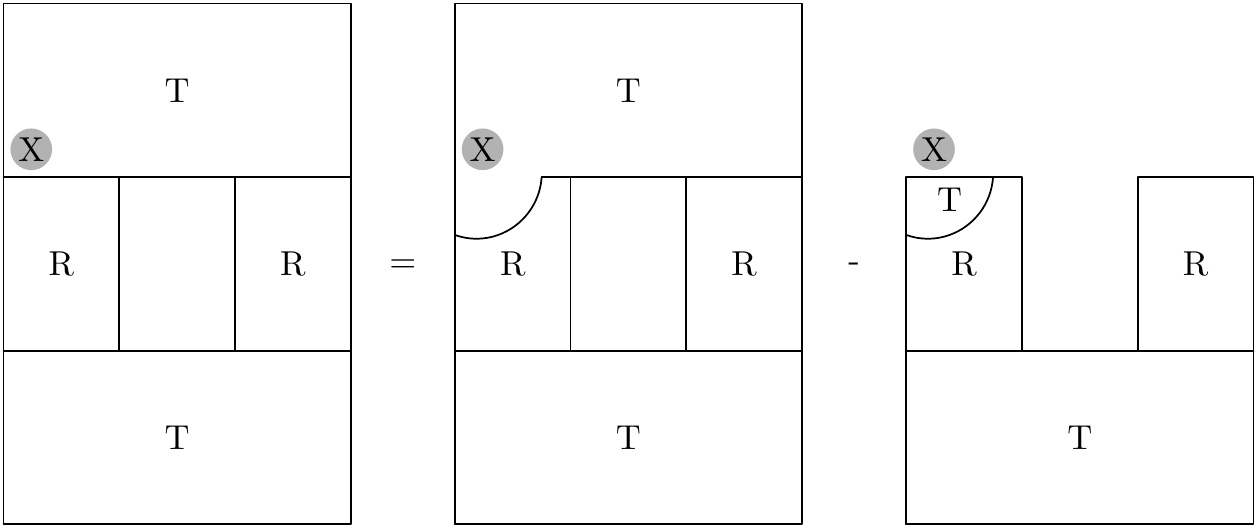}
\caption{After applying the isolation move, the conditional mutual spectrum is deformed in such a way that i) for the new conditional mutual spectrum, $X$ is sufficiently far away from the reference party ii) the difference is a conditional mutual spectrum with small conditional mutual information. \label{fig:isolation}}
\end{figure}
We also define a {\it separation move}. The purpose of the separation move is to deform the target party so that $X$ is sufficiently separated from the target party, See FIG.\ref{fig:separation}.
\begin{figure}[h!]
\includegraphics[width=0.48 \textwidth]{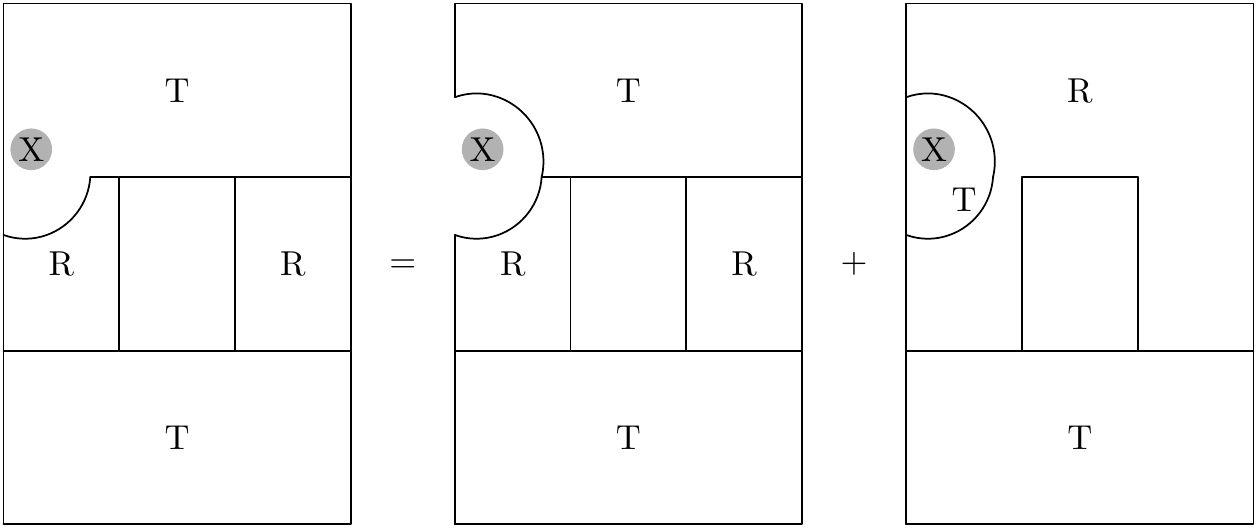}
\caption{After applying the separation move, the conditional mutual spectrum is deformed in such a way that i) for the new conditional mutual spectrum, $X$ is sufficiently far away from both the reference and target party ii) the difference is a conditional mutual spectrum with small conditional mutual information. \label{fig:separation}}
\end{figure}

By first applying the isolation move and then the separation move, one can always deform the configuration to be distance $O(l)$ away from $X$. The correction from the deformation procedure is of the form $\Tr(\rho_{A_iB_iC_iX} \hat{H}_{A_i:C_i|B_i}X)$ with $I(A_i:C_i|B_i)=o(1)$. To bound these terms, we introduce an {\it absorption move}, see FIG.\ref{fig:absorption}.
\begin{figure}[h!]
\includegraphics[width=0.48 \textwidth]{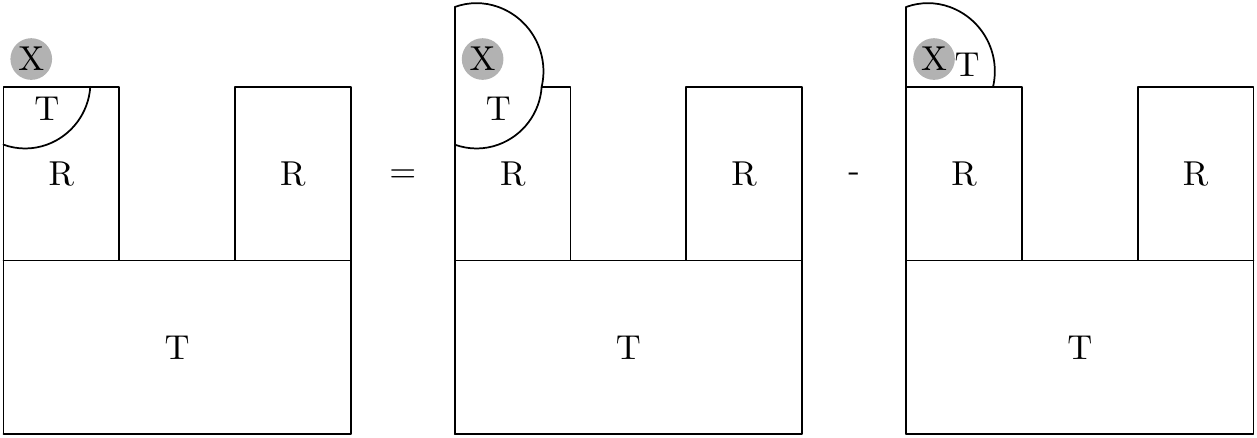}
\caption{Goal of the absorption move is to change the correction terms into conditional mutual spectrum $\hat{H}_{A_i:C_i|B_i}$ such that i) support of $X$ is contained in either $A_i$ or $C_i$ ii) $I(A_i:C_i|B_i)$ is small. \label{fig:absorption}}
\end{figure}

After applying the absorption move, corrections from the deformation move can be expressed as a sum of terms of the form $\Tr(\rho_{A_iB_iC_i} \hat{H}_{A_i:C_i|B_i} X)$ with $X \subset A_i$. These terms can be bounded using Lemma \ref{lemma:lemma1}.

To summarize, given a topologically nontrivial configuration, $\mathcal{C}(\hat{H}_{A:C|B},X)$ can be expressed as $\mathcal{C}(\hat{H}_{A':C'|B'},X)$ with $d(A'B'C', X) = O(l)$ and the correction terms that can be expressed as a sum of $\epsilon_{A_i} \|X \|$ and $\epsilon_{A_i:B_i} \|X \|$. Assuming i) $X$ is localized in one of the original subsystems $A,B,C,(ABC)^c$ ii) each of the subsystems are sufficiently large, the correction terms vanish in the $l \to \infty$ limit. One may be tempted to think that $\mathcal{C}(\hat{H}_{A':C'|B'},X)$ vanishes in the $l \to \infty$ limit as well, since correlation decays exponentially in the ground state of a gapped system.\cite{Hastings2006,Nachtergaele2006} While this speculation turns out to be correct, we emphasize that a slight modification of the exponential clustering theorem is necessary.

\subsection{Modified form of exponential clustering theorem}
Before we explain the details of our analysis, we would like to present a technical background about the subject. Exponential clustering theorem states that
\begin{equation}
|\mathcal{C}(O_A,O_B)| \leq c \|O_A \| \|O_B \| \min(|A|, |B|) e^{-\frac{d(A,B)}{\xi}}
\end{equation}
for two spatially separated operator $O_A$ and $O_B$, provided there is a gapped parent hamiltonian that consists of sum of geometrically local bounded-norm terms.\cite{Hastings2006,Nachtergaele2006} Since the spectrum of $\hat{H}_A$ is formally unbounded, one cannot directly apply exponential clustering theorem. We circumvent this problem by regularizing the entanglement spectrum and bounding the error from the regularization procedure.

\begin{defi}
Regularized entanglement spectrum $\hat{H}_{A}^{\Lambda}$ with a cutoff $\Lambda$ is
\begin{equation}
\hat{H}_A^{\Lambda} = -\sum_{p \geq 1/\Lambda} \log p_i \ket{i} \bra{i}.
\end{equation}
\end{defi}
Simple consequence of this construction is that $l_{\infty}$ norm is bounded, i.e. $\| \hat{H}_A^{\Lambda}\| \leq \log \Lambda$. Correction from the regularization can be bounded using the following lemma.
\begin{lem}
\begin{equation}
\Tr (\rho_{AB}\Delta_A^{\Lambda} O_B ) \leq \|O_B \| \frac{\log \Lambda}{\Lambda} d_A
\end{equation}
for $\Lambda \geq 2$, where $\Delta_{A}^{\Lambda} = \hat{H}_A - \hat{H}_A^{\Lambda}$.
\end{lem}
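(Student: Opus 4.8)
The plan is to reduce this operator statement to a one-line scalar estimate on the tail of the Schmidt spectrum of $\rho_A$. First I would record that $\Delta_A^\Lambda = \hat H_A - \hat H_A^\Lambda = -I_{A^c}\otimes\sum_{p_i<1/\Lambda}\log p_i\,\ket i\bra i$ is a \emph{positive} operator supported on $A$: every retained eigenvalue satisfies $p_i<1/\Lambda\le 1/2<1$, so $-\log p_i>0$. The crucial observation is that although $\Delta_A^\Lambda$ need not commute with $\rho_{AB}$, it is supported on $A$ and hence commutes with $O_B$. Therefore $\Tr(\rho_{AB}\Delta_A^\Lambda O_B)=\Tr\big((\Delta_A^\Lambda)^{1/2}\rho_{AB}(\Delta_A^\Lambda)^{1/2}O_B\big)$, obtained by inserting $(\Delta_A^\Lambda)^{1/2}(\Delta_A^\Lambda)^{1/2}$, commuting one factor past $O_B$, and cycling the trace. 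The operator $\sigma := (\Delta_A^\Lambda)^{1/2}\rho_{AB}(\Delta_A^\Lambda)^{1/2}$ is positive, so $\Tr(\sigma O_B)\le\|O_B\|\Tr\sigma$, and $\Tr\sigma=\Tr(\rho_{AB}\Delta_A^\Lambda)=\Tr_A(\rho_A\Delta_A^\Lambda)$ after tracing out $B$ (and $A^c$).

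Next I would compute $\Tr_A(\rho_A\Delta_A^\Lambda)$. Since $\rho_A=\sum_i p_i\ket i\bra i$ and $\Delta_A^\Lambda$ are simultaneously diagonal, this equals $\sum_{i:\,p_i<1/\Lambda}(-p_i\log p_i)$. It then remains to show this sum is at most $\frac{\log\Lambda}{\Lambda}d_A$. The function $t\mapsto -t\log t$ is increasing on $(0,1/e]$, so for $\Lambda\ge e$ each summand is bounded by its value at $t=1/\Lambda$, namely $\frac{\log\Lambda}{\Lambda}$, and there are at most $d_A$ nonzero summands, which closes the estimate. For the narrow residual window $2\le\Lambda<e$ I would dispatch the slightly tighter scalar inequality directly, using $\sum_i p_i\le 1$ together with concavity of $-t\log t$ so that the constrained maximum of the tail sum is still controlled by $\frac{\log\Lambda}{\Lambda}d_A$.

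The step I expect to be the main obstacle — or at least the one demanding the most care — is the first rearrangement: it looks illegitimate because $\Delta_A^\Lambda$ does not commute with $\rho_{AB}$, and the whole argument hinges on the fact that commutation with $O_B$ (rather than with $\rho_{AB}$) is all that is actually required, after which positivity of $\sigma$ does the rest. The secondary point to be vigilant about is the small-$\Lambda$ endpoint of the scalar bound, where the constant $\frac{\log\Lambda}{\Lambda}$ is essentially tight and the clean term-by-term argument no longer suffices by itself.
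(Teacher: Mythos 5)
Your main line is correct and, in substance, identical to the paper's: both arguments reduce the claim to the scalar tail estimate $\sum_{p_i<1/\Lambda}(-p_i\log p_i)\le \frac{\log\Lambda}{\Lambda}d_A$ and prove it term by term, using monotonicity of $-t\log t$ on $(0,1/e]$ and the fact that $\rho_A$ has at most $d_A$ eigenvalues. The only difference is the device used to extract the factor $\|O_B\|$: the paper purifies $\rho_{AB}$ and uses the Schmidt decomposition to write the quantity as $\sum_{p_i<1/\Lambda}(-p_i\log p_i)\bra{i}O_B\ket{i}$ with $|\bra{i}O_B\ket{i}|\le\|O_B\|$, whereas you sandwich $\rho_{AB}$ with $(\Delta_A^\Lambda)^{1/2}$ and apply positivity/H\"older. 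Both are valid and land in the same place.

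The one step of your plan that cannot be carried out is the patch for the window $2\le\Lambda<e$. You correctly observe that the term-by-term bound fails there (e.g.\ $p_i=1/e<1/2$ gives $-p_i\log p_i=1/e>\frac{\log 2}{2}$), but no concavity argument will rescue it, because the inequality as stated is genuinely false in that window: take $d_A=3$, $\rho_A=I/3$, $O_B=I_B$, $\Lambda=2$. Then every eigenvalue lies below the cutoff, the left-hand side equals $S(\rho_A)=\log 3\approx 1.099$, and the right-hand side is $\frac{3\log 2}{2}\approx 1.040$ (the same failure occurs in any base of logarithm). The paper's own proof silently commits the same error by asserting $-p_i\log p_i\le\frac{\log\Lambda}{\Lambda}$ for all $p_i\le 1/\Lambda$, which requires $\Lambda\ge e$, not $\Lambda\ge 2$. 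This is immaterial for the application, where $\Lambda=d_{ABC}e^{O(l)/\xi}$ is enormous; the honest fix is simply to state the lemma for $\Lambda\ge e$, at which point your argument (and the paper's) is complete.
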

\begin{proof}
Purify $\rho_{AB}$ to $\ket{\psi}_{ABC}$. Rewrite the formula as $\Tr (\rho_{AB}\Delta_A^{\Lambda} O_B )  = \bra{\psi}_{ABC} \Delta_A^{\Lambda} O_B \ket{\psi}_{ABC}$. Note that $\ket{\psi}_{ABC}$ admits a Schmidt decomposition $\ket{\psi}_{ABC} = \sum_i \sqrt{p_i} \ket{i}_A \ket{i}_{BC}$, where $\rho_A = \sum_i p_i \ket{i}_A \bra{i}_A$. This in turn can be expressed as
\begin{equation}
\sum_{p_i \leq 1/\Lambda} -p_i \log p_i \bra{i}_{BC} O_B \ket{i}_{BC}.
\end{equation}
Using $-p_i \log p_i \leq \frac{1}{\Lambda} \log \Lambda$ and $|\bra{i} O_B \ket{i}| \leq \|O_B \|$, one can complete the proof.
\end{proof}

\section{Physical interpretation\label{section:interpretation}}
Setting $\Lambda= d_{ABC} e^{O(l)/ \xi}$, we arrive at the following conclusion.
\begin{equation}
|\mathcal{C}(\hat{H}_{A:C|B},X)| \leq \| X \|(\epsilon_1(l) + \epsilon_2(l))l^2,\label{eq:correlation_bound}
\end{equation}
where $\epsilon_1$ represents a deviation from the ideal area law, and $\epsilon_2$  represents an error from the long range correlation. As $l \to \infty$, conditional mutual spectrum has vanishing correlation with any local operator, provided that i) $X$ is supported on one of $A,B,C,$ or $(ABC)^c$ ii) both $\epsilon_1$ and $\epsilon_2$ decays sufficiently fast. In $l \to \infty$ limit, we have
\begin{equation}
\langle \hat{H}_{A:C|B} X\rangle = I(A:C|B) \langle X \rangle.\label{eq:main_result}
\end{equation}

We conclude that operator $\hat{H}_{A:C|B}$ is topological, in a sense that i) it has vanishing correlation with any operator that is localized in one of the subsystems ii) its eigenvalues contain information about the phase. A set of assumptions to conclude so was that i) correlation decays exponentially ii) the extensive terms of the entanglement entropy cancel out each other iii) the deformation procedure separating $X$ from $ABC$ does not change the topology of the configuration.

We emphasize that the derivation of our result is not necessarily restricted to a pure state. At finite temperature, entanglement entropy obtains volume contributions, but one may be able to show that those contributions can be canceled out as well. In particular, we expect these conditions to be met for quantum many-body systems at sufficiently high temperature.

In the large volume limit, it seems the local contribution of the reduced density matrices cancel out each other, at least when $I(A:C|B)=o(\frac{1}{l^2})$. We do not have a definitive proof for this statement, but we argue as follows. If $\hat{H}_{A:C|B}$ contains a localized term, one could have chosen $X$ to be an operator supported nearby so as to have a large correlation with the local term. Such terms will violate Eq.\ref{eq:main_result}. Our result suggests a decomposition of the entanglement spectrum into i) terms that can be canceled out by a  suitable choice of subsystems ii) terms that cannot be canceled out and have small correlation with almost any local operators. It would be interesting if the terms of the first kind can be shown to be quasilocal.

\section{Conclusion}
We have presented a general argument as to why certain linear combination entanglement spectrum allows cancelation of local degrees of freedom, owing in part to a recently discovered information-theoretic inequality. While our formulation is not as precise as the ones described by the variational wavefunction,\cite{Cirac2011,Sterdyniak2011,Dubail2012} it has an advantage of being applicable to a more general class of quantum states. Indeed, we have only used an approximate form of area law and the exponential clustering theorem, which are strongly believed to be a generic property of a gapped phase.

It would be interesting if the approximate conditional independence can be shown to hold in other systems. There are evidences suggesting that models based on BF theory should satisfy such condition\cite{Castelnovo2008}, yet no studies have been performed for exotic models in 3D such as Haah's code.\cite{Haah2011} As for the finite temperature states, approximate conditional independence is one of the key ideas of quantum belief propagation(QBP) algorithm.\cite{Hastings2007c} Success of the QBP indicates that our result may be applicable to finite temperature quantum states as well.\cite{Bilgin2009}

On the other hand, we wish to find a deeper insight as to why conditional independence arises in these systems. In particular, exactly solvable models which satisfy exact conditional independence can be thought as a fixed point of some renormalization-group procedure.\cite{Koenig2008} Does conditional mutual information of topologically trivial configurations monotonically decrease under such flow?

We conclude with a remark that our correlation bound cannot be applied to operators that are supported on more than one of the subsystems. Ability to bound correlation of such form can be used for showing perturbative stability of topological entanglement entropy, but that shall be published elsewhere.\cite{Kim2012}

{\it Acknowledgements---} This research was supported in part by NSF under Grant No. PHY-0803371, by ARO Grant No. W911NF-
09-1-0442, and DOE Grant No. DE-FG03-92-ER40701. I would like to thank Alexei Kitaev for helpful discussions.
\bibliography{bib}

\end{document}